\newtheorem{theorem}{\textbf{Theorem}}
\newtheorem{definition}{\textbf{Definition}}
\newtheorem{proposition}{\textbf{Proposition}}
\newtheorem{remark}{\textbf{Remark}}
\title{Fine-grained Private Knowledge Distillation}
\author {
    Yuntong Li \textsuperscript{\rm 1},
    Shaowei Wang \textsuperscript{\rm 1},
    Yingying Wang \textsuperscript{\rm 1},
    Jin Li \textsuperscript{\rm 1} \\
    Yuqiu Qian \textsuperscript{\rm 2},
    Bangzhou Xin \textsuperscript{\rm 3},
    Wei Yang\textsuperscript{\rm 3}
}
\begin{document}

\maketitle

\begin{abstract}
Knowledge distillation has emerged as a scalable and effective way for privacy-preserving machine learning. One remaining drawback is that it consumes privacy in a model-level (i.e., client-level) manner, and every distillation query incurs privacy loss of one client's all records. In order to attain fine-grained privacy accountant and improve utility, this work proposes a model-free \textit{reverse $k$-NN labeling} method towards record-level private knowledge distillation, where each record is employed for labeling at most $k$ queries. Theoretically, we provide bounds of labeling error rate under the centralized/local/shuffle model of differential privacy (w.r.t. the number of records per query, privacy budgets). Experimentally, we demonstrate that it achieves new state-of-the-art accuracy with one order of magnitude lower of privacy loss. Specifically, on the CIFAR-$10$ dataset, it reaches $82.1\%$ test accuracy with centralized privacy budget $1.0$; on the MNIST/SVHN dataset, it reaches $99.1\%$/$95.6\%$ accuracy respectively with budget $0.1$. It is the first time deep learning with differential privacy achieves comparable accuracy with reasonable data privacy protection (i.e., $\exp(\epsilon)\leq 1.5$). Our code is available at \href{https://github.com/liyuntong9/rknn}{https://github.com/liyuntong9/rknn}.
\end{abstract}


\section{Introduction}\label{sec:intro}
\noindent Federated machine learning benefits from data across multiple individuals or organizations. However, data privacy has been a critical issue during collaboration, especially under increasingly rigid privacy laws, such as General Data Protection Regulation in the Europe Union, California Consumer Privacy Acts in California, and Data Security Law of the PRC in China. In contrast to transmitting raw data among clients, the seminal work of federated learning \cite{konevcny2016federated} proposes to share gradients. Subsequent works \cite{wei2020federated,luo2021scalable} further impose rigorous protections (e.g., differential privacy \cite{dwork2006differential}) on the gradients. Since iteratively transmitting gradients is inefficient (especially for deep neural networks), researchers \cite{papernot2016semi,papernot2018scalable} begin to employ the paradigm of knowledge distillation \cite{hinton2015distilling}. Federated clients are asked to label public-available data with locally-trained models (see the top of Figure \ref{fig:comparison}), meanwhile preserving the privacy of clients' local records. Because labels have much lower dimensionality than gradients, federated knowledge distillation has become a communication\&privacy efficient and thus prevalent way to federated deep learning \cite{lyu2020differentially,zhu2021data}. 

\begin{figure}
\begin{center}
\centerline{\includegraphics[width=85mm]{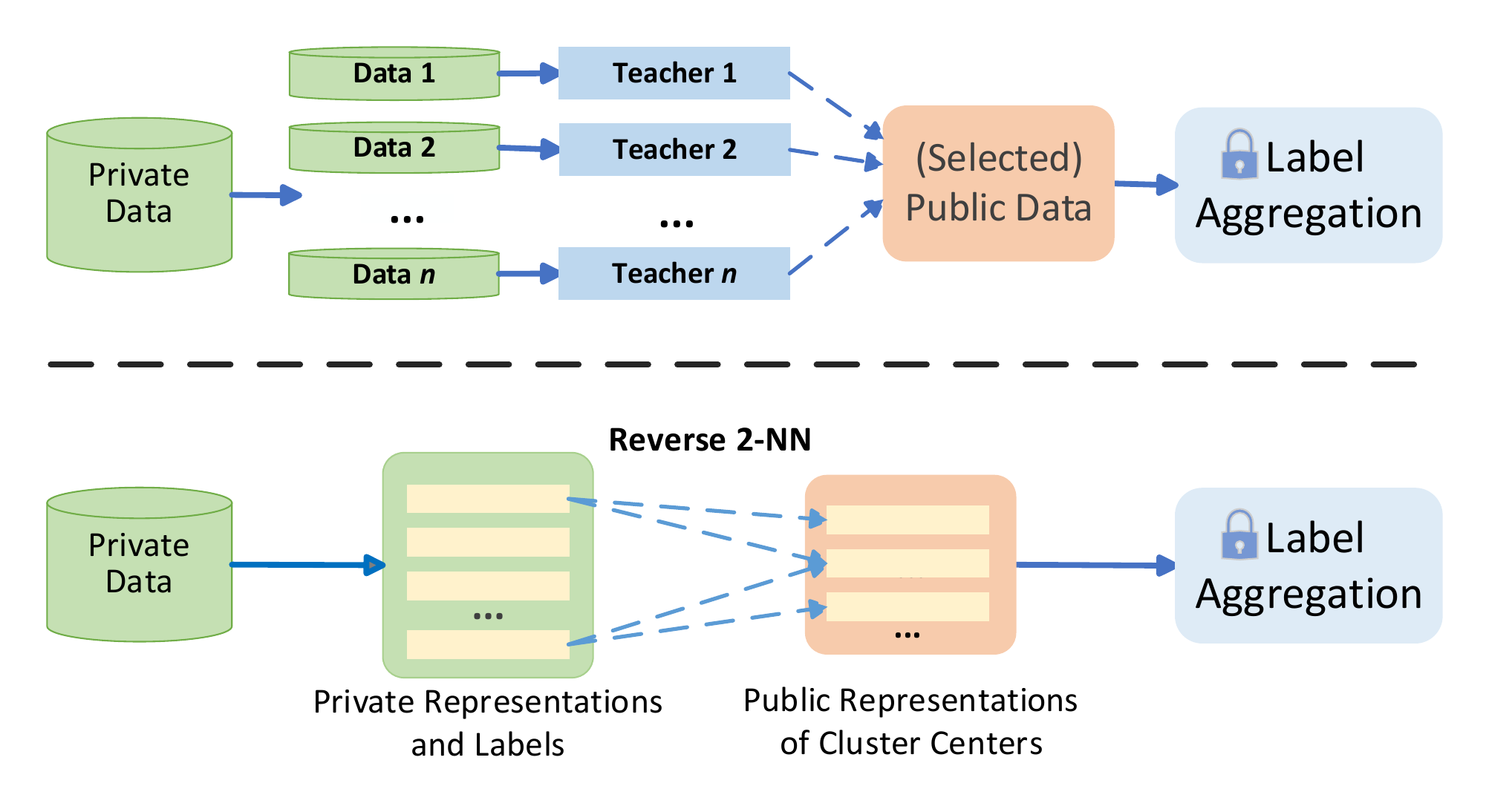}}
\vskip -0.05in
\caption{Comparison of the current paradigm of federated knowledge distillation (top) and our record-level private knowledge distillation with reverse $k$-NN (bottom).}
\label{fig:comparison}
\end{center}
\vspace*{-1.5em}
\end{figure}

Despite many advantages over sharing gradients, the current knowledge distillation paradigm is still suffering from a critical drawback on privacy accountant. Instead of accounting privacy loss at the record level as the gradient-sharing paradigm, one client's records in the knowledge-distillation paradigm are summarized to a privacy-sensitive local model. When answering a distillation query with the local model, it is then almost intractable to account for each record's contribution to the answer, or to concisely derive each record's privacy loss.

Such a coarse client-level privacy consumption wastes privacy budget and fails to extract maximum knowledge from each record. Therefore, this work initializes the study of federated knowledge distillation with record-level (pure) differential privacy. With the help of unsupervised representation learning, we propose the model-free \textit{reverse $k$-NN labeling} to achieve bounded contribution and constant privacy loss of every record. As demonstrated in Figure \ref{fig:comparison}, every record is associated with $k$ nearest neighboring querying samples, hence its privacy loss scales with only $k$ instead of the number of total queries. In concise, the advantages of this novel method over the current paradigm of federated knowledge distillation are two-fold:
\begin{itemize}
    \item{\textbf{Fine-grained privacy accountant. \ } As opposed to the current paradigm that every record's contribution and privacy loss (through local model) can not be separately accounted for, the new method limits one single record's contribution, which enables more fine-grained privacy accountants and better knowledge elicitation.}
    \item{\textbf{Broad application scenarios. \ } Instead of relying on a locally trained model that needs hundreds of training records, our method only needs an appropriate distance measure (e.g., via learned representations) for reverse $k$-NN. Therefore, it could apply to both the cross-silo (i.e., every client has relatively abundant records) and cross-device (i.e., every client has only a few records or one record) settings in federated learning. Besides, it is naturally immune to data \emph{Non-I.I.D.} settings and incurs only few-rounds communication.}
\end{itemize}

\subsection{Contributions}
This work formulates the model-free reverse $k$-NN labeling problem as \textit{Bucketized Sparse Vector Summation}, and then provides thorough solutions for the problem under centralized/local/shuffle differential privacy \cite{dwork2006differential}. Theoretically, We also analyze the labeling error rate of proposed solutions. The contributions of this paper are as follows.
\begin{itemize}
    \item{We initialize the study of federated knowledge distillation with record-level privacy preservation, and propose the model-free reverse $k$-NN query labeling method for achieving record-level (pure) differential privacy.}
    \item{We formulate the model-free reverse $k$-NN labeling problem as Bucketized Sparse Vector Summation (i.e., BSVS), and provide concrete mechanisms \& theoretical guarantees for the problem under centralized/local/shuffle differential privacy.}
    \item{For the first time, we show that the most stringent scenario of local private federated deep learning (with knowledge distillation) is practical. It reaches $98.5\%$ test accuracy on MNIST and $78.2\%$ test accuracy on CIFAR-10 with a reasonable local budget $\epsilon=0.4$.}
    \item{Through experiments, we demonstrate that our method achieves a significant accuracy boost meanwhile consuming an order of magnitude less of privacy budget when compared to existing approaches.}
\end{itemize}

\section{Related Work}\label{sec:related}
We here retrospect efforts toward private and accurate knowledge distillation. The seminal work of knowledge distillation \cite{hinton2015distilling} transfers knowledge from a large teacher model to a compact student model, and aims for boosting inference efficiency and accuracy simultaneously. Nowadays researches on decentralized/federated learning \cite{papernot2016semi,papernot2018scalable,lin2020ensemble} employ ensemble knowledge distillation as a communication\&privacy efficient learning paradigm beyond gradient/parameter aggregation \cite{abadi2016deep}. 

Some studies (e.g., in \cite{sun2020federated}) discover that knowledge distillation naturally resists membership inference attacks to some extent. For formally guaranteeing data privacy, it is necessary to conform to the centralized/local/shuffle differential privacy (DP) during distillation. Specifically, one line of studies inject Laplace/Gaussian random noise for preserving centralized $\epsilon$-DP on aggregated labels \cite{papernot2018scalable,wang2019private} from teacher models; another line of studies firstly sanitize the teacher model's label locally \cite{sun2020federated,lyu2020differentially}, and then aggregate these local $\epsilon$-DP labels. Recently, for bridging the advantages of centralized DP (i.e., high accuracy) and local DP (i.e., minimum trust), several studies \cite{liu2020flame,feldman2021hiding} propose shuffling messages from clients.

Note that all these studies consider client-level DP preservation, because of the impossibility to bound the worst-case contribution of one record in the teacher model. The most closely related works \cite{zhu2020private,zhu2020voting} try to account for one single record's contribution when distilling knowledge with $k$-NN. However, they only guarantee an approximate\&data-dependent version of DP. In contrast, our work realizes rigorous pure DP accountant at the record level and achieves significant accuracy improvements.

\section{Preliminaries}\label{sec:preliminary}

\subsection{Federated Knowledge Distillation}
Every data record $(x,y)$ is sampled from a Cartesian domain $\mathcal{X}\times\mathcal{Y}$, where the sample $x$ might be a tabular vector, an image, and etc. The class label $y$ can be a binary value (i.e. $\mathcal{Y}=\{0,1\}$) or categorical value (e.g., $\mathcal{Y}=\{0,1,..,9\}$ in hand-written digit recognition). Assuming that the client $i$ possesses $m_i$ records, we let $D^i$ denote these records as $D^i=[(x^i_{1},y^i_{1}),...,(x^i_{m_i},y^i_{m_i})]$, and let $D_{priv}$ denote the union of all local datasets:
$$D_{priv}=\bigcup_{i=1}^{n} D^i.$$

Let $D_{pub}=[(x_1,?),...,(x_{m_{pub}},?)]$ denote the public unlabeled dataset possessed by the federation orchestrator, the primary goal of federated knowledge distillation/transfer is then labeling $D_{pub}$ with knowledge from the $D_{priv}$. Current approaches \cite{papernot2018scalable,li2019fedmd,lyu2020differentially} are utilizing proxy models (e.g., neural networks) for answering labeling queries (every model is locally trained on $D^i$).

\subsection{Centralized Differential Privacy}
For datasets $D_{priv}$, $D'_{priv}$ that are of the same size and differ only in \emph{one record}, they are called \emph{neighboring datasets}. The centralized DP \cite{dwork2006differential}  at record-level with budget $(\epsilon,\delta)$ is as follows. As comparison, the client-level DP corresponds to neighboring datasets differ at one client's data.

\begin{definition}[Centralized $(\epsilon,\delta)$-DP]\label{def:cdp}
Let $\mathcal{K}$ denote the output domain, a randomized mechanism $K$ satisfies $\epsilon$-differential privacy iff for any neighboring datasets $D, D'$, and any outputs $\mathbf{z} \subseteq \mathcal{K}$,
$$\mathbb{P}[K(D) \in \mathbf{z}]\leq \exp(\epsilon)\cdot \mathbb{P}[K(D')\in \mathbf{z}]+\delta.$$
\end{definition}

\subsection{Local Differential Privacy}
Let $K$ denote a randomized mechanism for sanitizing a single record, the local DP \cite{duchi2013local} is as follows. 

\begin{definition}[Local $\epsilon$-DP]\label{def:ldp}
Assuming each client holds a dataset $D$ with only one record, a randomized mechanism $K$ satisfies local $\epsilon$-differential privacy iff for any data pair $D,D' \in \mathcal{X}\times \mathcal{Y}$, and any output $z \in \mathcal{K}$,
$$\mathbb{P}[K(D) = z]\leq \exp(\epsilon)\cdot \mathbb{P}[K(D')=z].$$
\end{definition}

\subsection{Shuffle Differential Privacy}
As a remedy to the low utility issue of the local privacy model, researchers \cite{bittau2017prochlo} propose the shuffle privacy model, where semi-trusted shufflers (or anonymous channels) lie between clients and the server. Let $K$ denote the local randomizer, and $t^i=K(D^i)$ denote the private message(s) from the client $i$, the definition of shuffle $(\epsilon,\delta)$-DP is as follows.

\begin{definition}[Shuffle $(\epsilon,\delta)$-DP]\label{def:sdp}
The randomized mechanism $K$ satisfies shuffle $(\epsilon,\delta)$-differential privacy iff the unordered union set $\bigcup_{i=1}^{n} t^i$ satisfies centralized $(\epsilon,\delta)$-DP constraints for the dataset $D_{priv}$.
\end{definition}

\section{Reverse $k$-NN Labeling for KD}\label{sec:nonprivate}
We now introduce the reverse $k$-NN labeling method for federated knowledge distillation. To clarify its principles of design, we compare it with the conventional $k$-NN. Without loss of generality, we here consider image classification and it can be applied to other domain (e.g., tabular data, natural language, video, etc.) without much effort.

\begin{algorithm}[tb]
\caption{Record-level private KD}
\label{alg:rknn}
\textbf{Input}: $n$ clients, private datasets $\mathcal{D}^1,...,\mathcal{D}^n$, unlabeled public dataset $\mathcal{D}_{pub}$.\\
\textbf{Parameter}:  number of iterations $T$, number of nearest neighbors $k$, number of query samples $s$, privacy budget $\epsilon$.\\
\textbf{Output}: student model $M_S$ satisfies $\epsilon$-differential privacy.
\begin{algorithmic}[1]
\FOR{t=1,2,...,T}
    \STATE{select $s$ query samples $\mathcal{D}_{query}$ from $\mathcal{D}_{pub}$}
    \STATE{\color{Azure4}{// Client side}}
    \FOR{i=1,2,...,n}
        \STATE{connect each local record $(x_j^i, y_j^i) \in \mathcal{D}^{i}$ to $k$-nearest queries in $\mathcal{D}_{query}$}
        \STATE{find $k$-nearest neighbors $N_j^i \subseteq[1:s]$ of $x_j^i$}
        
        \STATE{represent the labeling answer on each query $l\in [1:s]$ as $a^i_l=\sum \llbracket l\in N^i_j \rrbracket y^i_j \in \mathbb{R}^{|\mathcal{Y}|}$}
        \STATE{send all labeling answers $A^i=[a^i_1,a^i_2,...,a^i_s]$}
    \ENDFOR
    \STATE{\color{Azure4}{// Sever side}}
    \STATE{aggregate the label counts $a_l=\sum\limits_{i=1}^{n}a_l^i \in \mathbb{R}^{|\mathcal{Y}|}$ from all clients for each $l\in [1:s]$}
    \STATE{ensemble $A=[a_1,a_2,...,a_s]$ and add noise to $A$ by $\epsilon$-differential privacy} 
    \STATE{derive labels $\{\hat{y}_l\}_{l=1}^s$ from noisy counts $A$}
    \STATE{train student model $M_S$ on $\mathcal{D}_{query}$ with labels $\{\hat{y}_l\}_{l=1}^s$}
\ENDFOR
\STATE \textbf{return} $M_S$
\end{algorithmic}
\end{algorithm}

\subsection{Methodology}
In the reverse $k$-nearest-neighbors labeling, for limiting every private record's contribution, each record is associated with at most $k$ (nearest) query samples; in order to improve the labeling accuracy, learned representations (instead of raw pixels) are utilized for distance measurement. At each learning iteration (see Algorithm \ref{alg:rknn}), the method follows four steps.

\paragraph{Learning to Represent:} Since raw pixels are unstable w.r.t. semantic labels, we measure sample distance by their latent representations. One can use pre-trained representation models (e.g., vision Transformers), or train an unsupervised representation model from scratch (e.g., via self-supervised learning \cite{chen2020simple}) with public-available $D_{pub}$.

\paragraph{Selecting Queries:} Labeling all samples in $D_{pub}$ is communication/computation/privacy expensive. Follow current approaches \cite{papernot2018scalable,wang2019private}, we select representative samples from $D_{pub}$. At the first iteration, we cluster $D_{pub}$ into $s$ groups in the representation space, and treat cluster centers $Q=[q_1,q_2,...,q_s]$ as query samples. For later iterations, samples are selected adaptively w.r.t. uncertainty of the current model $M_S$.   

\paragraph{Local Labeling:} Given queries $Q=[q_1,q_2,...,q_s]$, the client $i$ connects these query samples with local records under the reverse $k$-NN rule. Every local record $(x^i_j,y^i_j)$ is connected to $k$ nearest query samples (in the representation space). Assume that the label $y^i_j$ is presented in the one-hot vector form, and let $N^i_j\subseteq [1:s]$ denote the set of query indices that are $k$ nearest neighbors of $x^i_j$, then the labeling answer from client $i$ is $A^i=[a^i_1,a^i_2,...,a^i_s]$, where $a^i_l=\sum_{j=1}^{m_i}\llbracket l\in N^i_j\rrbracket y^i_j.$

\paragraph{Label Aggregation:} Given the labeling answers $A^1,...,A^n$ from clients, we summarize them as $A=[a_1,a_2,...,a_s]$, where
$a_l=\sum_{i=1}^m a^i_l$.
The final hard labeling results is then $(q_1,\hat{y}_1),...,(q_s,\hat{y}_s)$, where for $l\in [1:s]$:
$$\hat{y}_l=\arg \max_{c=1}^{|\mathcal{Y}|} a_l(c).$$
The final soft labeling results are $(q_1,\overline{y}_1),...,(q_s,\overline{y}_s)$, where
$\overline{y}_l=\frac{a_l}{\sum_{c=1}^{|\mathcal{Y}|} a_l(c)}$.
After assigning every sample in $D_{pub}$ with the label of its cluster center, a student model is built upon labeled  $\mathcal{D}_{pub}$ (iteration $1$) or $\mathcal{D}_{query}$ (iterations $[2:T]$) with conventional cross-entropy loss.

The proposed method is highly efficient, the computational/communication cost of the client $i$ is linear to number of local samples $|D^i|$ and queries $s$ (i.e., $O(|D^i|\cdot s\cdot T + s\cdot |\mathcal{Y}|\cdot T)$ and $O(s\cdot |\mathcal{Y}|\cdot T)$). Compared to the classical knowledge distillation paradigm, it has the advantage of avoiding local model training, thus fits both resource-rich cross-silo and resource-scarce cross-device settings.

Note that the summarized labeling answer $A$ is independent from how private samples $D_{priv}$ distribute among clients, and only depends on the whole $D_{priv}$. Therefore, the above method resists non-IID settings.

\subsection{$k$-NN vs. Reverse $k$-NN}\label{subsec:vs}
To tell apart our record-level approach from the current model-level paradigm for federated knowledge distillation, we here compare the reverse $k$-NN with $k$-NN labeling. 

When the $k$-NN classifier works as a local model for labeling (e.g., in \cite{zhu2020private,zhu2020voting}), each query sample is associated with at most $k$ records. However, from a record's perspective, it might be associated with all $s$ query samples, hence its maximum-possible contribution (i.e., the sensitivity in differential privacy) to the final answer $A$ is $\Theta(s)$. As a comparison, one record's contribution in the reverse $k$-NN is bounded by $\Theta(k)$ and is much smaller than $\Theta(s)$. This difference in worst-case contribution causes dramatic gaps when seeking a privacy/utility trade-off.

Note that seeking other connection rules having the same privacy guarantee as reverse $k$-NN is also possible, please refer to Appendix C for detail.  
\section{Centralized Private Mechanisms}\label{sec:central}
In this section, we reformulate the reverse $k$-NN labeling as the problem of Bucketized Sparse Vector Summation (BSVS), then present centralized DP mechanisms for the problem, and provide corresponding labeling error bounds.


\subsection{Reformulation}
The key steps in the reverse $k$-NN labeling can be abstracted as Bucketized Sparse Vector Summation (see Definition \ref{def:bsvs}). Compared to the histogram summation and generalized bucketed vector summation \cite{chang2021locally}, a critical difference is that the vector we consider is sparse, as the label $y^i_j$ is one-hot (in multi-class classification) or multi-hot (in multi-label classification).

\begin{definition}[Bucketized Sparse Vector Summation]\label{def:bsvs}
In the BSVS problem, each datum corresponds to a set $T_j \subseteq T$ of $k$ buckets and a sparse vector $y_j \in \{0,1\}^{|\mathcal{Y}|}$ and $|y_j|=r$ . The goal is to determine, for a given $t\in T$, the vector sum of bucket $t$, which is $a_t := \sum_{j=1}^{\sum_{i=[1:n]}m_i}\  y_j \llbracket t\in T_j \rrbracket$. An approximate oracle $\tilde{a}$ is said to be $(\eta,\beta)$-accurate at $t$ if we have $|a_t-\tilde{a}_t|_{+\infty}<\eta$ with probability $1-\beta$.
\end{definition}


In the above reformulation, the number of buckets is equal to the number of query samples: $|T|=s$. Note that in conventional multi-class classification, we have $r\equiv 1$.


\subsection{Mechanism and Accuracy Guarantees}
When centralized $(\epsilon,0)$-DP is imposed on the BSVS problem, we employ the classical Laplace mechanism for privacy preservation. Apparently, the sensitivity $\Delta$ is the maximum possible magnitude of $A^i=[a^i_1,a^i_2,...,a^i_s]$ (i.e., $2k\cdot r$). Therefore, we inject $Laplace(\frac{2k\cdot r}{\epsilon})$ to every element of $A$. The corresponding accuracy guarantee is presented in Proposition \ref{pro:laplace}, which is derived from the tail probability bound of the Laplace distribution.

\begin{proposition}\label{pro:laplace}
There is an $(\frac{2k\cdot r \cdot \log(|\mathcal{Y}|/\beta)}{\epsilon}, \beta)$-accurate centralized $\epsilon$-DP algorithm for the BSVS problem.
\end{proposition}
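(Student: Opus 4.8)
The plan is to instantiate the standard Laplace mechanism on the concatenated answer vector $A\in\mathbb{R}^{s\cdot|\mathcal{Y}|}$ and then split the argument into a privacy part (verifying $\epsilon$-DP through the $\ell_1$ sensitivity) and an accuracy part (a Laplace tail bound combined with a union bound over the $|\mathcal{Y}|$ classes). Both ingredients are classical, so the only genuine work is pinning down the sensitivity and correctly counting how many coordinates the accuracy guarantee must control.

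First I would compute the $\ell_1$ sensitivity $\Delta$ of the map $D_{priv}\mapsto A$. Consider two neighboring datasets differing in one record. By Definition~\ref{def:bsvs} each record $j$ contributes $y_j\llbracket t\in T_j\rrbracket$ to bucket $t$, so it touches exactly $|T_j|=k$ buckets and adds a vector of $\ell_1$ norm $|y_j|=r$ to each touched bucket. Deleting the old record therefore changes $A$ by $\ell_1$ amount $k\cdot r$, and inserting the replacement record changes it by a further $k\cdot r$; hence $\Delta=2kr$, matching the stated noise scale. Adding independent $Laplace(\Delta/\epsilon)=Laplace(2kr/\epsilon)$ noise to every coordinate of $A$ then yields centralized $\epsilon$-DP by the standard Laplace-mechanism theorem (Definition~\ref{def:cdp} with $\delta=0$).

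Next I would establish accuracy at a fixed bucket $t$. Writing $\tilde{a}_t=a_t+L$ where $L\in\mathbb{R}^{|\mathcal{Y}|}$ has i.i.d.\ coordinates $L_c\sim Laplace(b)$ with $b=2kr/\epsilon$, the error is $|a_t-\tilde{a}_t|_{+\infty}=\max_c|L_c|$. The single-coordinate tail bound $\mathbb{P}[|L_c|\geq x]=\exp(-x/b)$ combined with a union bound over the $|\mathcal{Y}|$ classes gives $\mathbb{P}[\max_c|L_c|\geq \eta]\leq |\mathcal{Y}|\exp(-\eta/b)$. Setting the right-hand side equal to $\beta$ and solving for $\eta$ yields $\eta=b\log(|\mathcal{Y}|/\beta)=\frac{2kr\log(|\mathcal{Y}|/\beta)}{\epsilon}$, which is exactly the claimed accuracy.

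The main obstacle is careful bookkeeping rather than hard analysis: one must verify that (i) the factor $2$ in the sensitivity arises from the insertion-plus-deletion view of same-size neighboring datasets, and that (ii) the accuracy guarantee is stated per bucket $t$, so the union bound ranges only over the $|\mathcal{Y}|$ label coordinates and not over the $s$ buckets — using $s$ here would needlessly inflate $\eta$. Once these two counting points are fixed, the result follows directly from the two classical facts above.
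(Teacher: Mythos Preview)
Your proposal is correct and matches the paper's approach exactly: the paper states (without a formal proof) that the sensitivity is $2kr$, injects $Laplace(2kr/\epsilon)$ noise to each entry of $A$, and says the accuracy bound follows from the Laplace tail bound. Your write-up supplies precisely these details---the swap-model sensitivity calculation and the per-bucket union bound over the $|\mathcal{Y}|$ label coordinates---and both counting points you flag are handled correctly.
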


For the $t$-th query/bucket, define the (non-private) count gap between the true label $y^*\in[1:|\mathcal{Y}|]$ and false labels as: 
$$Gap_t=a_t(y^*)-\max_{c\in [1:|\mathcal{Y}|]\ and\ c \neq y^*} a_t(c),$$
we then have the following conclusion on the private labeling accuracy w.r.t. the accuracy of the BSVS problem:
\begin{remark}
If $Gap_t\geq 2\alpha$ and the private algorithm is $(\alpha, \beta)$-accurate, then with probability $1-\beta$, the estimated hard labeling result is accurate (equals to the true label $y^*$).
\end{remark}

\section{Local Private Mechanisms}\label{sec:local}
Considering the most stringent case of imposing local DP on every client who holds only one record (i.e., $m_i\equiv 1$), every client $i$ now sanitize the labeling answer $A^i=[a^i_1,a^i_2,...,a^i_s]$ independently. Note that this case also fits cases one client holds multiple records, if we sample one record or simply normalizing labeling answers $A^i$.
Naively, we could also adopt the Laplace mechanism and add $Laplace(\frac{2\cdot k \cdot r}{\epsilon})$ to every element in $A^i$. However, it is dominated by the randomized response mechanism \cite{duchi2013local}, which randomly flips every binary value in $A^i$ with probability $\frac{1}{e^{\epsilon/(2k r)}+1}$. We show randomized response is $(O(\sqrt{\frac{n k^2 r^2\log(|\mathcal{Y}|/\beta)}{ \epsilon^2}}),\beta)$-accurate (in Theorem \ref{pro:rr}).
\begin{theorem}\label{pro:rr}
The local $\epsilon$-DP randomized response mechanism is an $(\frac{e^{\epsilon/(2k r)}+1}{e^{\epsilon/(2k r)}-1}\sqrt{{3n\log(|\mathcal{Y}|/\beta)/(e^{\epsilon/(2k r)}+1)}}, \beta)$-accurate algorithm for the BSVS problem when $\epsilon=O(1)$. 
\end{theorem}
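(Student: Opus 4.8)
The plan is to analyze the debiased aggregate estimator coordinate by coordinate, control its deviation with a Bernstein-type tail bound, and then union-bound over the $|\mathcal{Y}|$ label coordinates of a fixed bucket $t$. Fix a bucket $t\in T$ and a coordinate $c\in[1:|\mathcal{Y}|]$, and write $b^i\in\{0,1\}$ for the true bit that client $i$ contributes to entry $(t,c)$ of $A^i$, so that $a_t(c)=\sum_{i=1}^n b^i$. With flip probability $p=\frac{1}{e^{\epsilon/(2kr)}+1}$, randomized response outputs $\tilde b^i=b^i\oplus Z_i$ where $Z_i\sim\mathrm{Bernoulli}(p)$ are independent across clients. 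Since $\mathbb{E}[\tilde b^i]=(1-2p)b^i+p$, the natural unbiased estimator of the bucket sum is $\tilde a_t(c)=\frac{1}{1-2p}\big(\sum_{i=1}^n\tilde b^i-np\big)$. A short computation gives $1-2p=\frac{e^{\epsilon/(2kr)}-1}{e^{\epsilon/(2kr)}+1}$, so $\frac{1}{1-2p}=\frac{e^{\epsilon/(2kr)}+1}{e^{\epsilon/(2kr)}-1}$ already matches the leading factor in the claimed accuracy.

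Next I would decompose the estimation error into a sum of independent, bounded, mean-zero terms: $\tilde a_t(c)-a_t(c)=\frac{1}{1-2p}\sum_{i=1}^n W_i$ with $W_i=(1-2b^i)(Z_i-p)$. The key observation is that each $W_i$ has mean zero, satisfies $|W_i|\le 1-p$, and has variance exactly $p(1-p)$ regardless of the private value $b^i$ (because $(1-2b^i)^2=1$); hence the total variance is $np(1-p)$, independent of the data. I would then apply Bernstein's inequality to $\sum_{i=1}^n W_i$ and set the per-coordinate failure probability to $\beta/|\mathcal{Y}|$, which after a union bound over the $|\mathcal{Y}|$ coordinates yields $(\eta,\beta)$-accuracy at bucket $t$ and produces the $\log(|\mathcal{Y}|/\beta)$ factor. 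Substituting $p=\frac{1}{e^{\epsilon/(2kr)}+1}$ and using $p(1-p)\le p=\frac{1}{e^{\epsilon/(2kr)}+1}$, then pulling out the prefactor $\frac{1}{1-2p}$, gives the stated bound $\frac{e^{\epsilon/(2kr)}+1}{e^{\epsilon/(2kr)}-1}\sqrt{3n\log(|\mathcal{Y}|/\beta)/(e^{\epsilon/(2kr)}+1)}$.

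The main obstacle is pinning down the constant $3$, since Bernstein's bound carries both a variance term of order $np(1-p)$ and a linear range term proportional to the deviation, whereas only the variance term should appear in the final expression. This is exactly where the hypothesis $\epsilon=O(1)$ is used: it forces $\epsilon/(2kr)$ to be small, keeping $p$ bounded away from $0$ and close to $1/2$, so the range term is dominated by the variance term and its contribution can be absorbed into the loose constant $3$ (which upper-bounds the Gaussian-tail constant $2(1-p)\le 2$). I would make this rigorous by plugging the candidate $\eta$ back into the Bernstein exponent and checking that it exceeds $\log(2|\mathcal{Y}|/\beta)$ throughout this regime. The advertised asymptotic form $O\!\big(\sqrt{nk^2r^2\log(|\mathcal{Y}|/\beta)/\epsilon^2}\big)$ then follows immediately from the expansion $\frac{1}{1-2p}\approx\frac{4kr}{\epsilon}$ and $\frac{1}{e^{\epsilon/(2kr)}+1}\to\frac{1}{2}$ as $\epsilon/(2kr)\to 0$.
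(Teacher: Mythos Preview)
Your proposal is correct and follows essentially the same approach as the paper: both set up the debiased randomized-response estimator, apply a Chernoff/Bernstein-type tail bound to the sum of independent Bernoulli contributions, and union-bound over the $|\mathcal{Y}|$ label coordinates of a fixed bucket. Your treatment is in fact more explicit than the paper's---you spell out the centered decomposition $W_i=(1-2b^i)(Z_i-p)$, compute the exact variance $p(1-p)$, and identify precisely where the assumption $\epsilon=O(1)$ is needed (to suppress Bernstein's range term so the constant $3$ suffices)---whereas the paper invokes the concentration step in one line without isolating that dependence.
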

\begin{proof}
Recall that for a binary value $b$ flipped with probability $\frac{1}{e^{\epsilon/(2k r)}+1}$, the unbiased estimation given the observation $b'$ is $\tilde{b}=\frac{b'-{1}/(e^{\epsilon/(2k r)}+1)}{(e^{\epsilon/(2k r)}-1)/(e^{\epsilon/(2k r)}+1)}$. The total count of observed ones is a summation of $n$ Bernoulli variables with a success rate of either $\frac{1}{e^{\epsilon/(2k r)}+1}$ or $\frac{e^{\epsilon/(2k r)}}{e^{\epsilon/(2k r)}+1}$. Let $u$ denote the estimation bias of one element in $\tilde{a}_t$, we have $\mathbb{P}[|u|>\eta \cdot \frac{e^{\epsilon/(2k r)}+1}{e^{\epsilon/(2k r)}-1}]\leq \exp(\frac{-\eta^2 (e^{\epsilon/(2k r)}+1)}{3n})$. Therefore, with probability of $1-\beta$, we have $|a_t-\tilde{a}_t|_{+\infty}\leq \frac{e^{\epsilon/(2k r)}+1}{e^{\epsilon/(2k r)}-1}\sqrt{\frac{3n\log(|\mathcal{Y}|/\beta)}{e^{\epsilon/(2k r)}+1}}$.
\end{proof}

Due to budget splitting, the randomized response is suffering from the error rate of $\tilde{\Theta}(\frac{k\cdot r}{\epsilon})$. We can actually adopt an optimal sparse vector summation oracle (in the high privacy regime) \cite{wanghiding} for the BSVS problem and achieve an error rate of $\tilde{\Theta}(\frac{\sqrt{k\cdot r}}{\epsilon})$ (see Appendix A and B).

\section{Shuffle Private Mechanisms}\label{sec:shuffle}
When messages from users are anonymized \& shuffled by anonymous channels or shufflers, the server only observes a multi-set about messages. Consequently, to achieve a certain level of (centralized) differential privacy, every client can inject fewer noises in the local. According to the number of messages one client may publish, the shuffle privacy model can be categorized into the multi-message one \cite{ghazi2020private} and the single-message one \cite{feldman2021hiding}.   


\subsection{Multi-message Shuffling}
In the multi-message shuffle privacy model, the basic idea is to add noises to $A$ in a distributed manner. For the categorical distribution estimation problem with dimension $d$, \cite{ghazi2020private} proposes an $(\epsilon,\delta)$-DP protocol with an error equal to adding independent $Laplace(\frac{4}{\epsilon})$, and with expected messages of one user equal to $1+O(\frac{d \log^2(1/\delta)}{\epsilon^2 n})$, each consisting $\lceil \log d \rceil +1$ bits. In the protocol, the $Laplace(\frac{4}{\epsilon})$ is decomposed into $\Theta(n)$ negative binomial variables and added to every entry in $A^i$ accordingly. Each message is an index in $[1:d]$, which means plus one to the index. For the BSVS problem with $(\epsilon,\delta)$-DP, follow almost the same protocol in \cite{ghazi2020private}, the $Laplace(\frac{4k r}{\epsilon})$ can be added in a distributed manner with expected messages of one client equal to $k r+O(\frac{d k^2 r^2 \log^2(1/\delta)}{\epsilon^2 n})$, each consisting $\lceil \log d \rceil +1$ bits. Plugging into the analyses on Laplace mechanism in Proposition \ref{pro:laplace}, we conclude that it is $(\frac{4 k r \log(|\mathcal{Y}|/\beta)}{\epsilon},\beta)$-accurate.

\subsection{Single-message Shuffling}
When each client is constrained to send only one message to the shuffler, the message must be local DP \cite{cheu2019distributed}, while the privacy in the central perspective is amplified. Recently, \cite{feldman2021hiding} gives a tight privacy amplification bound for any local private mechanisms, shows $n$ local $\epsilon$-DP and shuffled messages satisfy centralized $(\log(1+(\frac{8\sqrt{e^\epsilon \log(4/\delta)}}{\sqrt{n}}+\frac{8 e^\epsilon}{n})\frac{e^\epsilon-1}{e^\epsilon+1}),\delta)$-DP when $\log(\frac{n}{16 \log(2/\delta)})\geq \epsilon$. In return, when centralized privacy budget $(\epsilon,\delta)$ is given, we can reversely derive the enlarged local budget, and provide corresponding accuracy guarantees with Theorem \ref{pro:rr}.




\section{Experiments}\label{sec:experiment}
To validate proposed record-level private mechanisms for federated learning, we conduct extensive experiments on real-world datasets  to answer the following questions: \textbf{(1)} What is the effect of query selection on accuracy? \textbf{(2)} What is the effect of parameter $k$ on accuracy? \textbf{(3)} What is the performance gap between our approach and SOTA methods?

The competitive approaches include SOTA private knowledge distillation methods by adding Laplace noise (\emph{LNMAX}) or Gaussian noise (\emph{GNMAX}) in \cite{papernot2016semi,papernot2018scalable}, private $k$-NN \cite{zhu2020private} and the noisy SGD methods in \cite{luo2021scalable}. Our approach is implemented with $(\epsilon,0)$-DP, while competitive approaches are implemented with $(\epsilon,10^{-5})$-DP. Notice that given the representation model trained on the $D_{pub}$ and the labeled samples in $D_{priv}$, one may also simply train a prediction head with noisy SGD \cite{abadi2016deep} as the classifier. We denote this straight-forward approach as \emph{Layer-$1$ noisy SGD} (with one prediction layer) and \emph{Layer-$2$ noisy SGD} (with two prediction layers). Both Layer-$1$ and Layer-$2$ noisy SGD train parameters with the size of few thousands, which is much smaller than the whole classifier's, but is still much larger than the label domain size $|\mathcal{Y}|$.


\subsection{Datasets and Networks}
Three popular image datasets are employed for experiments: \textbf{MNIST} \footnote{http: //yann.lecun.com/exdb/mnist} that contains $70,000$ gray-scale images of size $28\times 28$, has $10$ categories;
\textbf{SVHN} \footnote{http://ufldl.stanford.edu/housenumbers} that contains $630,420$ digit images of size $32\times 32$ and $10$ categories;
\textbf{CIFAR-10} \footnote{https://www.cs.toronto.edu/~kriz/cifar.html} that contains $60,000$ images of size $32\times 32$ and $10$ categories.


Following common settings in the literature, for the MNIST, we assume the public data $D_{pub}$ is $5,000$ samples from the test dataset, the remaining $5,000$ test samples are used for evaluating the performance of the student classifier, and the training dataset is used as the private data $D_{priv}$;
for the SVHN, we assume the public data $D_{pub}$ is $26k$ samples from the test dataset, the remaining $1k$ test samples are used for evaluating the performance of the student classifier, and the training dataset, together with the extended data, is used as the private data $D_{priv}$;
for the CIFAR-10, we assume the public data $D_{pub}$ is $30,000$ samples from the training set, the $1,000$ samples from the test dataset are used for evaluation, and use other $29,000$ samples as the private data $D_{priv}$. The experimental results of shuffle DP is omitted, since it is analogy to centralized or local DP.


For the MNIST dataset, the architecture of the student classifier is from \cite{an2020ensemble}, and the DTI \cite{monnier2020deep} is employed for general purpose representation \& clustering on $D_{pub}$ (denoted as \textbf{[general]}).
For the SVHN dataset, the architecture of the student classifier is Mixmatch \cite{berthelot2019mixmatch}, and the histogram of oriented gradients(HOG) \cite{dalal2005histograms} and $k$-means++ is employed for general purpose representation \& clustering on the $D_{pub}$.
For the CIFAR-10 dataset, the network architecture is DenseNet121, and the SimCLR \cite{chen2020simple} and $k$-means++ is used for representation learning \& clustering on the $D_{pub}$.



\subsection{Performance Metrics}
Two accuracy indications are employed for measuring the performances, one is the accuracy of the private label answering ($\text{Acc}_{pl}$), the other is the test accuracy of the privately learned classifier ($\text{Acc}_{pc}$). As we use unsupervised clustering for query selection at iteration $1$, here the $\text{Acc}_{pl}$ is the number of public samples receiving correct labels divided by $|D_{pub}|$.


\subsection{Varying Number of Clusters}
The purity of clusters (w.r.t. class labels) upper bounds the $\text{Acc}_{pl}$. Increasing $s$ can roughly increase purity, but reduce the number of local records associated with one query. We here explore the appropriate number $s$. When $s$ When s changes within a certain range, we present the experimental results $\text{Acc}_{pc}$ in Figure \ref{fig:mnistsk} for MNIST (with $\epsilon=0.1$),  Figure \ref{fig:svhnsk} for SVHN (with $\epsilon=0.1$), and Figure \ref{fig:cifar10sk} for CIFAR-10 (with $\epsilon=1$). For the more simple MNIST, the best number of clusters is around $40$; while for the SVHN/CIFAR-10, the accuracy increases with $s$, since it has more diversity in one class. The (omitted) experimental results on labeling accuracy $\text{Acc}_{pl}$ are always $0.5\%$-$3.0\%$ behind the $\text{Acc}_{pc}$, imply building the classifier could suppress labeling noises due to privacy preservation.

\begin{figure}[h]
\vspace*{-0.7em}
\begin{center}
\centerline{\includegraphics[width=80mm]{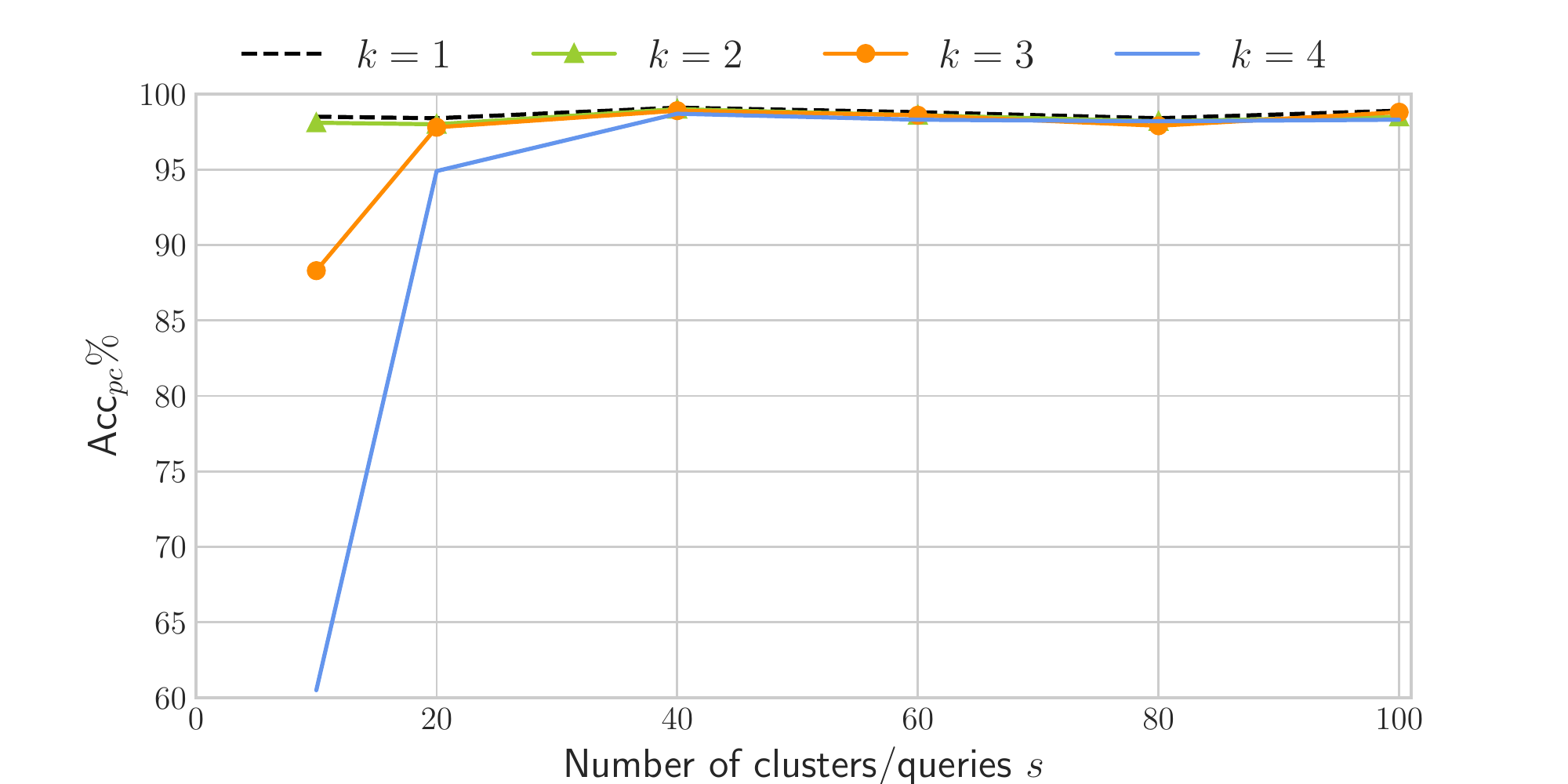}}
\vskip -0.1in
\caption{Experimental results on MNIST ($\epsilon=0.1$, $T=1$) with vary number of clusters $s$ and vary number of neighbors $k$.}
\label{fig:mnistsk}
\end{center}
\vspace*{-1.7em}
\end{figure}

\begin{figure}[h]
\vspace*{-0.7em}
\begin{center}
\centerline{\includegraphics[width=80mm]{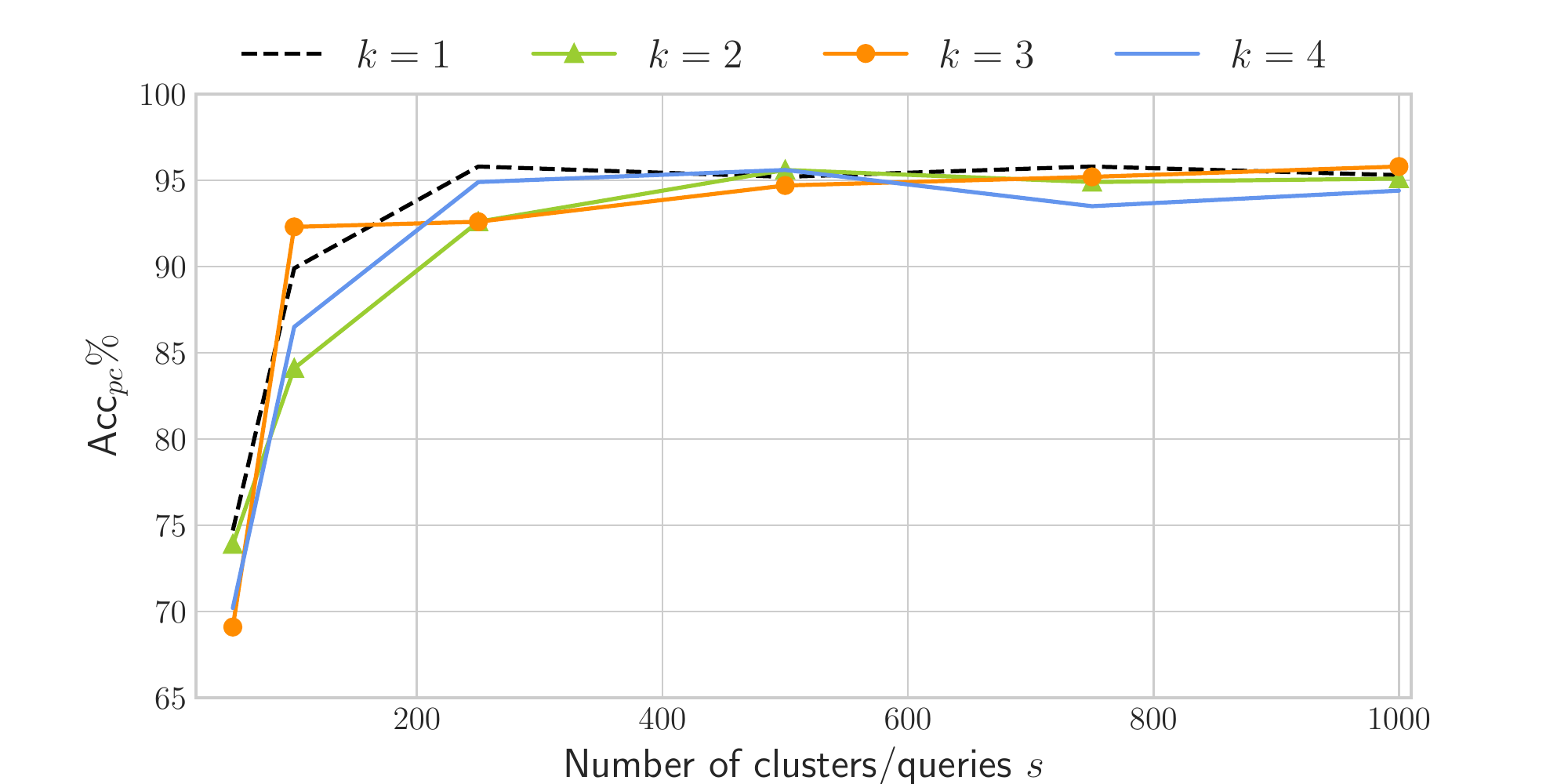}}
\vskip -0.1in
\caption{Experimental results on SVHN ($\epsilon=1$, $T=1$) with vary number of clusters $s$ and vary number of neighbors $k$.}
\label{fig:svhnsk}
\end{center}
\vspace*{-1.7em}
\end{figure}

\begin{figure}[h]
\vspace*{-0.7em}
\begin{center}
\centerline{\includegraphics[width=80mm]{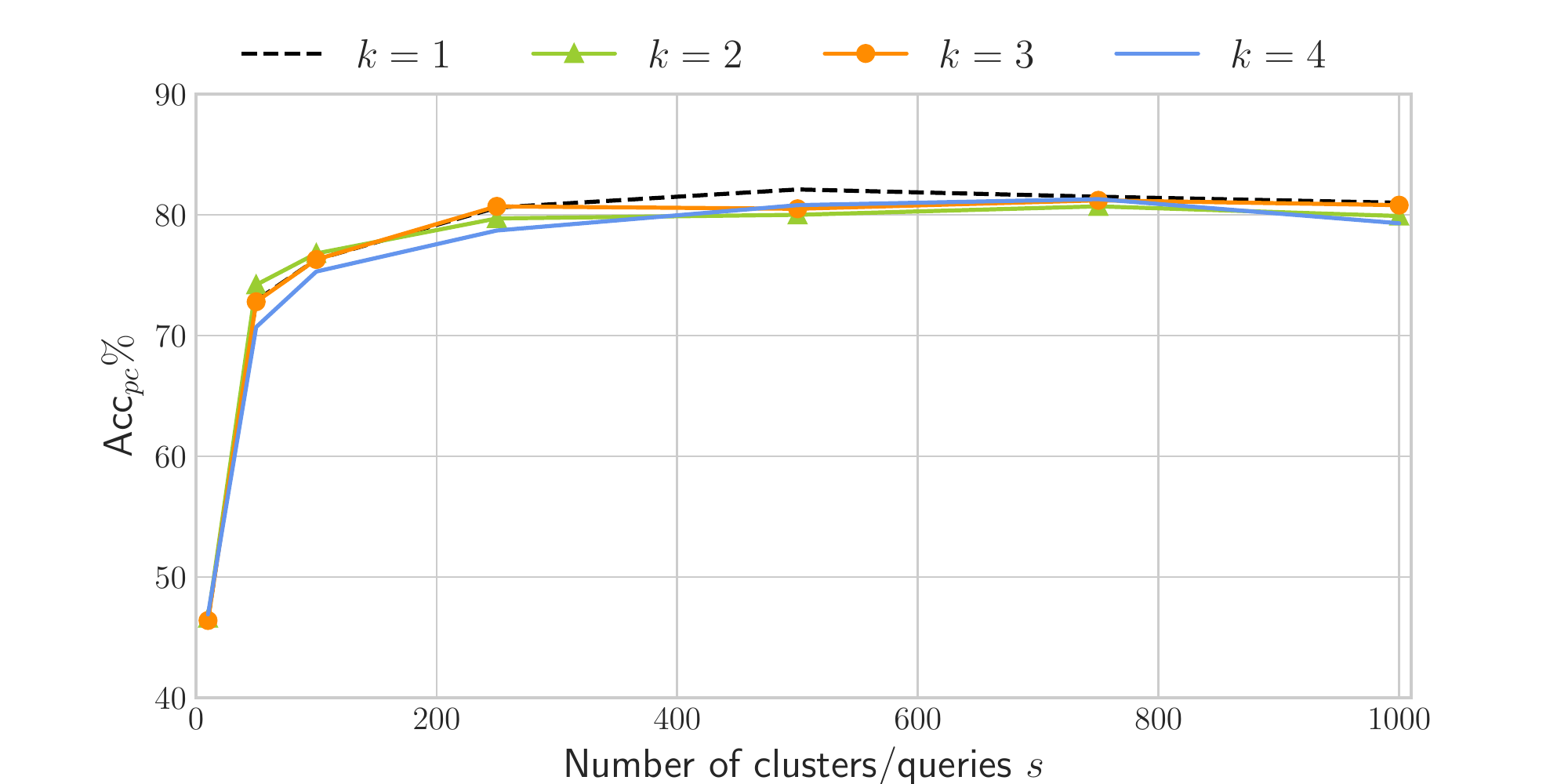}}
\vskip -0.1in
\caption{Experimental results on CIFAR-10 ($\epsilon=1$, $T=1$) with vary number of clusters $s$ and vary number of neighbors $k$.}
\label{fig:cifar10sk}
\end{center}
\vspace*{-1.7em}
\end{figure}

\begin{table*}
\renewcommand{\arraystretch}{1.06}
\caption{Test accuracy \& privacy consumption comparison of centralized differentially private methods.}
\vskip -0.1in
\label{tab:comparison}
\centering
\begin{tabular}{c|cccccc}
\hline\bfseries Dataset & \bfseries Methods &\bfseries $\#$Queries &\bfseries $\epsilon$ &\bfseries Test Acc. &\bfseries Label Acc. &\bfseries Non-priv Acc. \\
\hline
\multirow{8}{*}{MNIST} & LNMAX \cite{papernot2018scalable} & $1000$ & $8.03$ & $98.1\%$ & & \multirow{3}{*}{$99.2\%$} \\
 & GNMAX \cite{papernot2018scalable} & $286$ & $1.97$ & $98.5\%$ & & \\
 & Private $k$-NN \cite{zhu2020private}  & $735$ & $0.47$ & $98.8\%$ &  & \\
 & Noisy SGD \cite{abadi2016deep}  &  & $1.0$ & $81.2\%$ & &$91.1\%$ \\
 \cdashline{2-7}
 & Ours \textbf{[general]} & $40$ & ${\mathbf{0.1}}$ & ${\mathbf{99.1\%}}$ & ${98.5\%}$ & \multirow{2}{*}{$99.2\%$} \\
 & Ours \textbf{[general]} & $40$ & ${\mathbf{0.04}}$ & ${\mathbf{98.6\%}}$ & ${97.7\%}$ & \\\cdashline{2-7}
 & Ours \textbf{[end2end]} & $10$ & ${\mathbf{0.01}}$ & ${\mathbf{98.5\%}}$ & ${97.5\%}$ & \multirow{2}{*}{$98.7\%$}  \\
 & Ours \textbf{[end2end]} & $10$ & ${\mathbf{0.004}}$ & ${\mathbf{98.2\%}}$ & ${97.3\%}$ & \\
  
\hline
\multirow{6}{*}{SVHN} &  LNMAX \cite{papernot2018scalable} &
$1000$ & $8.19$ & $90.1\%$ & & \multirow{3}{*}{$92.8\%$} \\
& GNMAX \cite{papernot2018scalable} & $3098$ & $4.96$ & $91.6\%$ & & \\
   & Private $k$-NN \cite{zhu2020private}  & $2939$ & $0.49$ & $91.6\%$ &  & \\
 & Noisy SGD \cite{abadi2016deep}  &  & $4.0$ & $76.0\%$ & &$84.4\%$ \\
 \cdashline{2-7} 
& Ours \textbf{[general]} & $500$ & ${\mathbf{0.1}}$ & ${\mathbf{95.6\%}}$ &  & \multirow{2}{*}{$96.7\%$} \\
 & Ours \textbf{[general]} & $500$ & ${\mathbf{0.04}}$ & ${\mathbf{95.3\%}}$ &  & \\
\hline
\multirow{7}{*}{CIFAR-10}   & GNMAX \cite{papernot2018scalable} & $286$ &  & $<50\%$ &  & \multirow{2}{*}{$80.5\%$} \\
& Private $k$-NN \cite{zhu2020private}  & $3877$ & $2.92$ & $70.8\%$ \\
& Finetuning Noisy SGD  \cite{luo2021scalable}  &  & $1.0$ & $76.5\%$ & &  \\
& Layer-$1$ Noisy SGD \cite{abadi2016deep}  &  & $4.0$ & $73.7\%$ & & $77.7\%$  \\
& Layer-$2$ Noisy SGD \cite{abadi2016deep}  &  & $4.0$ & $78.5\%$ & & $80.9\%$  \\

\cdashline{2-7}
& Ours \textbf{[general]} & $500$ & $\mathbf{1.0}$ & $\mathbf{82.1\%}$ & $77.1\%$ & \multirow{2}{*}{$82.3\%$} \\
& Ours \textbf{[general]} & $500$ & $\mathbf{0.29}$ & $\mathbf{79.4\%}$ & $73.9\%$ & \\
\cdashline{2-7}
& Ours \textbf{[end2end]} & $10$ & $\mathbf{0.01}$ & $\mathbf{86.1\%}$ & $85.9\%$ & \multirow{2}{*}{$86.2\%$} \\
& Ours \textbf{[end2end]} & $10$ & $\mathbf{0.005}$ & $\mathbf{86.0\%}$ & $85.7\%$  \\
\hline
\end{tabular}
\end{table*}

\subsection{Varying $k$ in Reverse Nearest Neighbors}
We also explore the choice of $k$ of reverse $k$-NN in Figures \ref{fig:mnistsk}, \ref{fig:svhnsk}, and \ref{fig:cifar10sk}. It is demonstrated that there is no noticeable difference between choosing $k$ at $1$, $2$, $3$ or $4$. Theoretically, as $k$ gets larger, the label count of each query grows with $k$, but the count gap grows sublinear to $k$ and the standard devivation of the privacy noise grows with $k$. Here the $k$ in $[1,2,3,4]$ are all small, thus the sublinerity is negligible and the noises hardly overwhelm count gaps. It is experimentally observed that when $k$ grows to about $10$, the performances begin to drop significantly.




\subsection{Local DP}
For the most stringent case that local DP is imposed on every client's single record, we present results in Figure \ref{fig:ldpep} for MNIST/CIFAR-10. Since noises due to local DP easily dominate $Gap_t$, we here fix hyper-parameters at $s=|\mathcal{Y}|=10$ and $k=1$, and employ end-to-end unsupervised clustering on MNIST with DTI \cite{monnier2020deep} and CIFAR-10 with SCAN \cite{van2020scan}. It is observed that the Collision mechanism \cite{wanghiding} achieves test accuracy of $98.5\%$ for MNIST and $78.2\%$ for CIFAR-10 with privacy budget $\epsilon=0.4$. To the best of our knowledge, it is the first time local private deep learning provides meaningful privacy/accuracy trade-offs. 


\begin{figure}[h]
\vspace*{-0.5em}
\begin{center}
\centerline{\includegraphics[width=80mm]{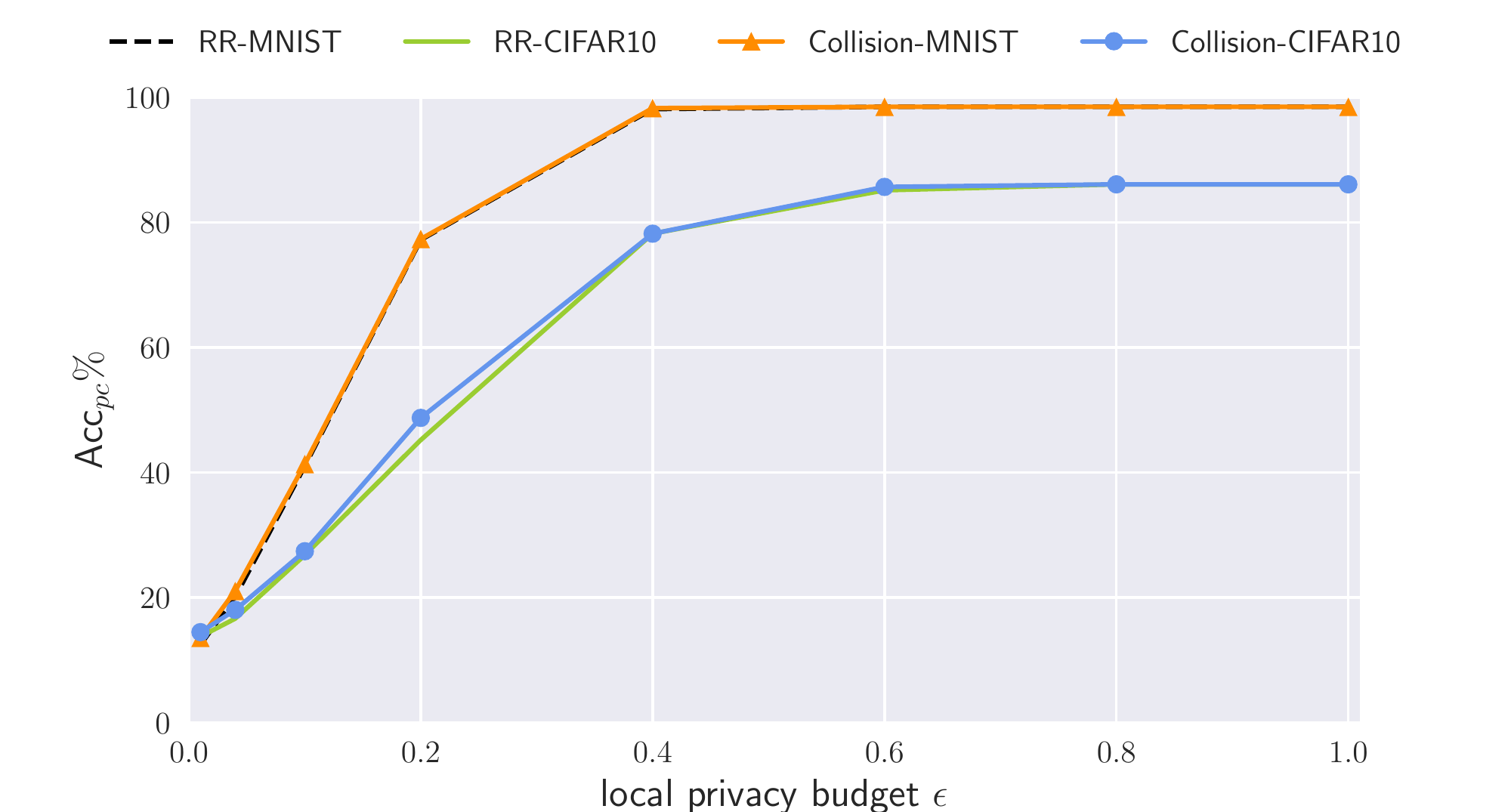}}
\vskip -0.1in
\caption{Local DP experimental results on MNIST/CIFAR-10 with randomized response (RR) and Collision mechanism (Collision), when $\epsilon$ ranges from $0.01$ to $1.0$.}
\label{fig:ldpep}
\end{center}
\vspace*{-1.5em}
\end{figure}


\subsection{Comparison with Existing Approaches}
In Table \ref{tab:comparison}, we compare our method' results with reported results of existing approaches with same settings. The hyper-parameter of our method is set to $T=1$ and $k=1$.
When utilizing general purpose unsupervised representation learning and clustering, compared to existing client-level DP methods (i.e., LNMAX, GNMAX) or record-level approximate DP methods (i.e., Noisy SGD, Private $k$-NN), our method achieves better accuracy with an order magnitude smaller privacy consumption. 
Specifically, if we employ end-to-end unsupervised clustering \cite{monnier2020deep,van2020scan} (denoted as \textbf{[end2end]}), we are able to achieve (average) accuracy of $86.1\%$ with centralized $\epsilon=0.01$ for CIFAR-10, and $99.1\%$ accuracy for MNIST. When the number of query $s=|\mathcal{Y}|$, we have the count gap $Gap_t\approx |D_{priv}|/s$ is tens of hundreds, which is large enough to overcome Laplace noises with standard deviation $2\sqrt{2}/\epsilon=2\sqrt{2}/0.01< 300$.

\subsection{Summary}
In summary, our record-level private knowledge distillation method is an effective way to centralized/decentralized machine learning, and significantly outperforms the Private $k$NN \cite{zhu2020private} that preserves only approximate\& data-dependent record-level privacy. The $82.1\%$ accuracy on CIFAR-10 also surpasses the SOTA accuracy $76.5\%$ (with $\epsilon=1$) of the noisy SGD method in \cite{luo2021scalable}, demonstrates the powerful privacy\&utility trade-off of knowledge distillation with record-level privacy. When equipped with tighter privacy accountant by R\'enyi differential privacy for our approach (in future study) or when data is non-I.I.D. across clients, the performance gaps can be even larger.  
\balance
\section{Conclusion}\label{sec:conclusion}
This work tackled one major drawback remaining in federated learning with knowledge distillation, and advocated for fine-grained record-level privacy preservation. We proposed the reverse $k$-NN labeling as a solution that limits every single record's contribution, and is naturally immue to non-I.I.D. settings. After formulating the reverse $k$-NN labeling as bucketized sparse vector summation (BSVS), we provided concrete differentially private mechanisms under comprehensive scenarios (i.e., in centralized/local/shuffle settings). Theoretically, these mechanisms are guaranteed for labeling accuracy, which is determined by privacy budget and label count gaps. Experimentally, our solution achieved $99.1\%$/$95.6\%$ test accuracy (with $\epsilon=0.1$) on the MNIST/SVHN dataset and $82.1\%$ test accuracy (with $\epsilon=1$) on the CIFAR-10 dataset, and improved significantly upon existing private knowledge-distillation/gradient-descent based methods with one magnitude lower of privacy consumption.

\newpage
\bibliography{refs}
\newpage
\section*{APPDENDIX}

\subsection*{A. An Optimal Oracle for High Privacy}
Due to budget splitting, the randomized response is suffering from the error rate of $\tilde{\Theta}(\frac{k\cdot r}{\epsilon})$. In this part, we adopt an optimal sparse vector summation oracle (in the high privacy regime) \cite{wanghiding} for the BSVS problem and achieve an error rate of $\tilde{\Theta}(\frac{\sqrt{k\cdot r}}{\epsilon})$.

If we flatten the labeling answer $A^i$, essentially the BSVS problem is a special case of sparse vector summation where the domain size $d$ is $s\cdot |\mathcal{Y}|$ and the maximum cardinality $c$ is $k\cdot r$. One mean-squared-error optimal mechanism for sparse vector summation is the Collision mechanism \cite{wanghiding} (see Definition \ref{def:collision}), where all non-zero entries are mapped into a more dense Bloom filter with length $l$ via local hashes. 

\begin{definition}[$(d,c,\epsilon,l)$-Collision Mechanism \cite{wanghiding}]\label{def:collision}
Given a random-chosen hash function $H: \mathcal{G}\mapsto \mathcal{Z}$, take a vector $V$ having $c$ non-zero entries as the input ($V \subseteq \mathcal{G}$), the Collision mechanism randomly outputs an element $z \in \mathcal{Z}$ according to following probabilities:
\begin{equation}\label{eq:collision}
    \mathbb{P}[z | V]= \left\{
    \begin{array}{@{}lr@{}}
        \frac{e^\epsilon}{\Omega},\ \ \ \ \ \ \ \ \ \ \ \ \ \ \ \ \ \ \ \ \ \ \ \ \ \ \ \ \ \ \ \   \text{if } \exists v \in V, z=H(v);\\
        \frac{\Omega-e^\epsilon\cdot\#\{H(v)\ |\ H(v)\ for\ v\ \in V\}}{(l-\#\{H(v)\ |\ H(v)\ for\ v\ \in V\})\cdot \Omega}.\ \ \ \  otherwise.\\
    \end{array}
    \right.
\end{equation}
The normalization factor is $\Omega=c\cdot e^\epsilon+l-c$. An unbiased estimator of indicator $\llbracket v\in V\rrbracket$ (for $v \in \mathcal{G}$ when $c\geq 2$) is:
$$\tilde{\llbracket v \in V\rrbracket}=\frac{\llbracket H(v)= z\rrbracket-1/l}{{e^\epsilon}/{\Omega}-1/l}.$$  
\end{definition}

Setting the Bloom filter length $l$ at around $2k r-1+k r e^\epsilon$, we show that the Collision mechanism for the BSVS problem is $(O(\sqrt{\frac{n k r \log(|\mathcal{Y}|/\beta)}{\epsilon^2}}),\beta)$-accurate (see Theorem \ref{pro:collision}).
\begin{theorem}\label{pro:collision}
The $(s\cdot |\mathcal{Y}|,k\cdot r,\epsilon,2k\cdot r-1+k\cdot r\cdot e^\epsilon)$-Collision mechanism is an $(O(\sqrt{\frac{n k r \log(|\mathcal{Y}|/\beta)}{\epsilon^2}}), \beta)$-accurate algorithm for the BSVS problem when $\epsilon=O(1)$.
\end{theorem}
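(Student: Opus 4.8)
The plan is to mirror the structure of the Laplace and randomized-response accuracy proofs: exhibit an unbiased per-element estimator, bound its variance, and then apply a concentration inequality across the $n$ clients followed by a union bound over the $|\mathcal{Y}|$ classes of a fixed bucket. First I would make the reduction explicit. In the local regime ($m_i\equiv 1$), flatten each client's record into the index set $V^i=\{(t,c): t\in T^i,\ y^i(c)=1\}$, so that $|V^i|=k\cdot r=c$ and the target quantity is $a_t(c)=\sum_{i=1}^n\llbracket (t,c)\in V^i\rrbracket$. Each client releases one Collision-mechanism message $z^i$, and I would form $\tilde a_t(c)=\sum_{i=1}^n \tilde{\llbracket (t,c)\in V^i\rrbracket}$ using the unbiased indicator estimator of Definition \ref{def:collision}. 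Unbiasedness of each term gives $\mathbb{E}[\tilde a_t(c)]=a_t(c)$ immediately.

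The crux is the variance computation, and this is where the specific tuning $l=2kr-1+kr\,e^\epsilon$ is used. Writing $X=\llbracket H(v)=z\rrbracket$, the estimator is $(X-1/l)/(e^\epsilon/\Omega-1/l)$ with $\Omega=c\,e^\epsilon+l-c$, so its variance is $\mathrm{Var}(X)/(e^\epsilon/\Omega-1/l)^2$. I would establish two facts. First, the numerator $\mathrm{Var}(X)=\mathbb{P}[H(v)=z](1-\mathbb{P}[H(v)=z])=\Theta(1/(kr))$, since $\mathbb{P}[H(v)=z]$ equals $e^\epsilon/\Omega$ when $v\in V^i$ and is $\Theta(1/l)$ on average otherwise, both of order $1/(kr)$ for $\epsilon=O(1)$. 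Second, with the chosen $l$ a direct simplification gives $e^\epsilon/\Omega-1/l=(e^\epsilon-1)(kr(e^\epsilon+1)-1)/(\Omega l)=\Theta(\epsilon/(kr))$. Combining these, the per-client variance is $\Theta(kr/\epsilon^2)$, so the $n$ independent terms sum to total variance $\Theta(nkr/\epsilon^2)$ — the $\sqrt{kr}$ (rather than $kr$) dependence on cardinality that distinguishes this oracle from randomized response.

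With the variance in hand I would invoke a Bernstein-type tail bound on the sum of $n$ independent, bounded, zero-mean estimators. Each summand has range $\Theta(kr/\epsilon)$, so the sub-exponential range term contributes only $O((kr/\epsilon)\log(1/\beta'))$ and is dominated by the variance term $O(\sqrt{(nkr/\epsilon^2)\log(1/\beta')})$ in the regime of interest. Setting $\beta'=\beta/|\mathcal{Y}|$ and taking a union bound over the $|\mathcal{Y}|$ classes of bucket $t$ then yields $|a_t-\tilde a_t|_{+\infty}=O(\sqrt{nkr\log(|\mathcal{Y}|/\beta)/\epsilon^2})$ with probability $1-\beta$, which is exactly the claimed $(\eta,\beta)$-accuracy.

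The main obstacle I anticipate is the variance step, specifically computing $\mathbb{P}[H(v)=z]$ for $v\notin V^i$ rigorously: because a random hash can collide with one of the (at most) $kr$ occupied buckets, the ``otherwise'' branch of Definition \ref{def:collision} mixes the two output probabilities, and $\#\{H(v):v\in V^i\}$ is itself random (strictly below $kr$ whenever collisions occur). I would either bound the expected collision count to keep $\Omega$ and the normalization within constant factors of their collision-free values, or invoke the mean-squared-error analysis of the Collision mechanism from \cite{wanghiding} directly. Either way, the step that genuinely certifies the result is verifying that the chosen $l$ is, up to constants, the minimizer of the resulting variance, so that the per-client error is $\Theta(\sqrt{kr}/\epsilon)$ rather than $\Theta(kr/\epsilon)$.
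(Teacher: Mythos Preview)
Your proposal is correct and follows essentially the same route as the paper: treat each $\llbracket H(v)=z\rrbracket$ as a Bernoulli variable with success probability between $1/l$ and $e^\epsilon/\Omega$, apply a concentration inequality to the sum over $n$ clients, rescale by $(e^\epsilon/\Omega-1/l)^{-1}$, and union-bound over the $|\mathcal{Y}|$ coordinates of a fixed bucket before simplifying with $e^\epsilon\approx 1+\epsilon$. The only cosmetic differences are that the paper invokes the multiplicative Chernoff bound directly on the Bernoulli sum (rather than Bernstein on the rescaled estimators) and silently sidesteps the hash-collision subtlety you flag by taking the two endpoint probabilities at face value.
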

\begin{proof}
Recall that the binary value $\llbracket H(v)= z\rrbracket$ could be deemed as a Bernoulli variable with success rate of either $\frac{1}{l}$ or $\frac{e^\epsilon}{\Omega}$. Since the total count of observed ones is a summation of $n$ independent Bernoulli variables, let $u$ denote the estimation bias of one element in $\tilde{a}_t$, according to the multiplicative Chernoff bound and the fact that $\frac{e^\epsilon}{\Omega}\geq \frac{1}{l}$, we have $\mathbb{P}[|u|>\eta\cdot n\cdot \frac{1}{{e^\epsilon}/{\Omega}-1/l}]\leq \exp(\frac{-\eta^2 n}{2  l})$. Therefore, when $l=2k\cdot r-1+k\cdot r\cdot e^\epsilon$, with probability of $1-\beta$, we have $|a_t-\tilde{a}_t|_{+\infty}\leq \frac{(k\cdot r\cdot e^{\epsilon}+2k\cdot r-1)(2k\cdot r\cdot e^{\epsilon}+k\cdot r-1)}{k\cdot r\cdot(e^{2\epsilon}-1)-(e^{\epsilon}-1)} \sqrt{\frac{2n \log(|\mathcal{Y}|/\beta)}{l}}$. Applying $e^{\epsilon}\approx \epsilon+1$, we have the bound. 
\end{proof}


\subsection*{B. An Improved Oracle for Low Privacy}
In this part, we improve the accuracy of BSVS in the low privacy regime (e.g., when $\epsilon>1$ with shuffling privacy amplification in the next section). Note that the labeling answer $A^i$ of user $i$ equals the multiplication of $T_i \in [0,1]^{s\times 1}$ and $y_i \in [0,1]^{1\times |\mathcal{Y}|}$ (see detail in Definition \ref{def:bsvs}). Instead of treating $A^i$ as a flat vector, we could derive $A^i$ from privately estimated $T_i$ and $y_i$. There are two approaches to estimate  $T_i$ and $y_i$ simultaneously:


\begin{itemize}
    \item \textbf{Separation:\ } We evenly split the local budget into two parts, and estimate $T_i$ and $y_i$ separately. Using the sparse vector oracle of the Collision mechanism \cite{wanghiding}, the average mean squared error of each estimated entry $\hat{a}$ in $T_i$ (or $\hat{b}$ in $y_i$) is then approximately $O(\frac{k}{\epsilon^2})$ (or $O(\frac{r}{\epsilon^2})$). Since most entries in $T_i$ and $y_i$ are zero (i.e., $\mathbb{E}[\hat{a}]=\mathbb{E}[\hat{b}]=0$), for a multiplied entry $\hat{a}\cdot \hat{b}$ in $A^i$, we have its average mean squared error is approximately:
    $$\text{Var}[\hat{a}\cdot \hat{b}]=\text{Var}[a]\cdot \text{Var}[b]=O(\frac{k r}{\epsilon^4}).$$
    Compared to the approach in the previous subsection having error $O(\frac{k r}{\epsilon^2})$, when the privacy budget is relatively high, this approach is more accurate.
    \item \textbf{Concatenation:\ } We concatenate the $T_i$ and $y_i$ to compose a vector with length $s+|\mathcal{Y}|$ and $k+r$ non-zero entries. Utilizing the sparse vector oracle in previous subsections, we have the average mean squared error of each entry in $T_i$ or $y_i$ is then approximately $O(\frac{k+r}{\epsilon^2})$. Hence, for every multiplied entry in $A^i$, we have its average mean squared error is approximately $O(\frac{(k+r)^2}{\epsilon^4})$.
\end{itemize}

\begin{figure}
\begin{center}
\centerline{\includegraphics[width=81mm]{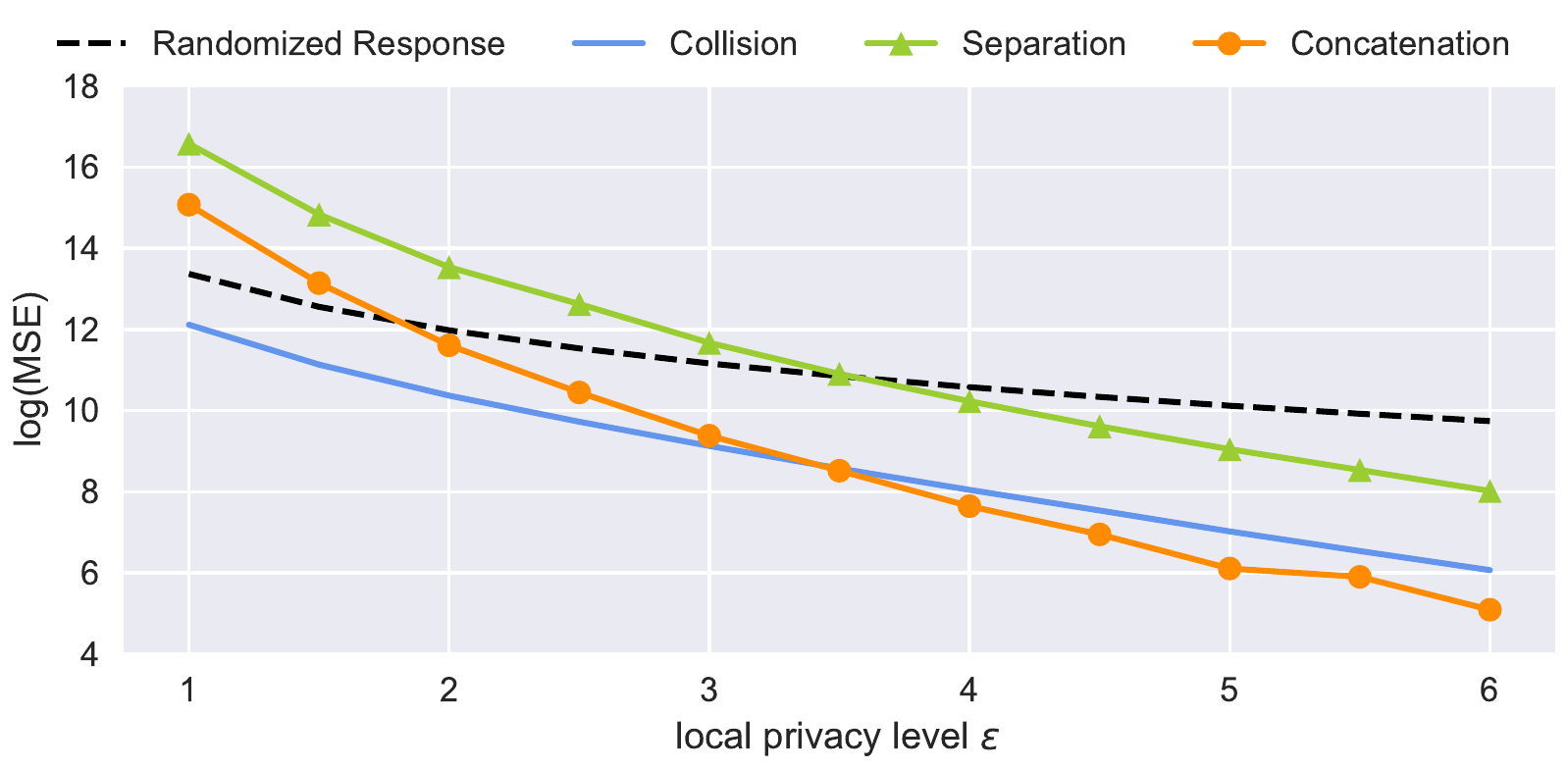}}
\vskip -0.1in
\caption{Comparison of local DP mechanisms in the low privacy regime for the BSVS problem with $n=1$, $s=200$, $|\mathcal{Y}|=50$, $k=2$, and $r=2$.}
\label{fig:approaches}
\end{center}
\vspace*{-2.3em}
\end{figure}

In Figure \ref{fig:approaches}, we compare the mean squared error of $(s\cdot |\mathcal{Y}|,k\cdot r,\epsilon,2k\cdot r-1+k\cdot r\cdot e^\epsilon)$-collision mechanism with the Separation/Concatenation approaches. It is observed that the Concatenation approach is more accurate when privacy budget is high (e.g., when $\epsilon\geq 3.5$), and the Concatenation dominates the Separation in all cases (due to smaller constant factor in error bounds).

\subsection*{C. Beyond Reverse $k$-NN Connection}
Recall that after privatization, the accuracy of a labeling answer $a_t$ highly relies on the non-private count gap $Gap_t$ between the true label and false labels. 
In order to increase the gap, it is desirable to assign more (similar) records to the $t$-th query. However, when utilizing the reverse $k$-NN rule for connecting records/queries, the degrees of queries might be imbalanced and some queries may only connect with few records, especially when domain shifts \cite{luo2019taking}. Hence, we investigate more balanced connections.



Let a bi-partied graph $\mathcal{G}=(Q, D, E)$ denote the connection relation between queries $Q$ and records $D$, where $E \subseteq Q\times D$ is the set of edges/connections. Since the maximum degree of nodes in $D$ is associated with privatization parameters, we here only consider connections $E \in \mathcal{E}^k$ that the nodes in $D$ have the maximum degree of $k$. Let $sim(q,x) \in \mathbb{R}^+$ denote the similarity between a node $q \in Q$ and a node $x \in D$ (derived from the latent representation of $q$ and $x$). Now for approximating the gap $Gap_t$, we define the score of a query $q$ given connections $E$ as:
$$Score(q) = \sum_{x\in D\ \text{and}\  (q,x)\in E} sim(q,x).$$
The reverse $k$-NN rule is thus equivalent to maximizing the (arithmetic) mean of these scores:
$\arg \max_{E \in \mathcal{E}^k} \frac{\sum_{q \in Q} Score(q)}{|Q|}.$
Alternatively, we may seek for more balanced connections by maximizing the minimum:
$\arg \max_{E \in \mathcal{E}^k} \min_{q \in Q} Score(q);$
or by maximizing the harmonic mean:
$\arg \max_{E \in \mathcal{E}^k} \frac{|Q|}{\sum_{q \in Q} 1/Score(q)}.$

\end{document}